\newtheorem{theorem}{Theorem}[section]
\newtheorem{lemma}[theorem]{Lemma}
\newtheorem{proposition}[theorem]{Proposition}
\newtheorem{corollary}[theorem]{Corollary}
\newtheorem{definition}[theorem]{Definition}
\newenvironment{proof}[1][Proof]{\begin{trivlist}
\item[\hskip \labelsep {\bfseries #1}]}{\end{trivlist}}
\begin{document}

\makeatletter
\def\@maketitle{%
  \newpage
  \null
  \vskip 2em%
  \begin{center}%
  \let \footnote \thanks
    {\Large\bfseries \@title \par}%
    \vskip 1.5em%
    {\normalsize
      \lineskip .5em%
      \begin{tabular}[t]{c}%
        \@author
      \end{tabular}\par}%
    \vskip 1em%
    {\normalsize \@date}%
  \end{center}%
  \par
  \vskip 1.5em}
\makeatother

\title{On Euclidean and Hermitian Self-Dual Cyclic Codes over $\mathbb{F}_{2^r}$}

\author{Odessa D. Consorte%
  \thanks{Electronic address: \texttt{oconsorte@math.upd.edu.ph}}}
\affil{Institute of Mathematics\\ University of the Philippines, Diliman, Quezon City, Philippines}

\author{Lilibeth D. Valdez%
  \thanks{Electronic address: \texttt{ldicuangco@math.upd.edu.ph}; Corresponding author}}
\affil{Institute of Mathematics\\ University of the Philippines, Diliman, Quezon City, Philippines}

\date{}

\maketitle
\begin{abstract}
Cyclic and self-dual codes are important classes of codes in coding theory.  Jia, Ling and Xing \cite{Jia} as well as Kai and Zhu \cite{Kai} proved that Euclidean self-dual cyclic codes of length $n$ over $\mathbb{F}_q$ exist if and only if  $n$ is even and $q=2^r$, where $r$ is any  positive integer.  For $n$ and $q$ even, there always exists an $[n, \frac{n}{2}]$ self-dual cyclic code with generator polynomial $x^{\frac{n}{2}}+1$ called the \textit{trivial self-dual cyclic code}.  In this paper we prove the existence of nontrivial self-dual cyclic codes of length $n=2^\nu \cdot \overline{n}$, where $\overline{n}$ is odd,  over  $\mathbb{F}_{2^r}$ in terms of the existence of a nontrivial splitting $(Z, X_0, X_1)$ of $\mathbb{Z}_{\overline{n}}$ by $\mu_{-1}$, where $Z, X_0,X_1$ are unions of $2^r$-cyclotomic cosets mod $\overline{n}.$  We also express the formula for the number of cyclic self-dual codes over $\mathbb{F}_{2^r}$ for each $n$ and $r$ in terms of the number of $2^r$-cyclotomic cosets in $X_0$ (or in $X_1$).

We also look at Hermitian self-dual cyclic codes and show properties which are analogous to those of Euclidean self-dual cyclic codes.  That is, the existence of nontrivial Hermitian self-dual codes over $\mathbb{F}_{2^{2 \ell}}$ based on the existence  of a nontrivial splitting $(Z, X_0, X_1)$ of $\mathbb{Z}_{\overline{n}}$ by $\mu_{-2^\ell}$, where $Z, X_0,X_1$ are unions of $2^{2 \ell}$-cyclotomic cosets mod $\overline{n}.$  We also determine the lengths at which nontrivial Hermitian self-dual cyclic codes exist and the formula for the number of Hermitian self-dual cyclic codes for each $n$.
\end{abstract}

\textbf{Keywords}: Cyclic codes, self-dual codes, splittings

\section{Introduction}
Cyclic codes have been widely studied and have found numerous applications in storage and communication systems due to the ease in their encoding/decoding.  Both finite fields and rings have been considered as ``alphabets" in the construction of cyclic codes.  In this study, we focus on cyclic codes over finite fields.

For a code $\mathcal{C}$ of length $n$ and dimension $k$, denoted as an $[n,k]$ code over a finite field $\mathbb{F}_q$ ($q$ a power of a prime),  its dual code, $\mathcal{C}^{\perp}$ is defined to be $\{\textbf{x} \in \mathbb{F}^n_q | \textbf{x} \cdot \textbf{c} = 0 \text{  for all  } \textbf{c} \in \mathcal{C}\}$.   $\mathcal{C}^\perp$ is an $[n,n-k]$ code.   A code is said to be Euclidean self-dual if and only if $\mathcal{C}=\mathcal{C}^\perp$.   

One method of constructing self-dual codes from cyclic codes is by extending cyclic codes whose length $n$ and the characteristic of the field $\mathbb{F}_q$ are relatively prime.  Smid \cite{Smid} showed that if an extended cyclic code is self-dual, then this cyclic code is a duadic code with splitting given by $\mu_{-1}$.  Ocampo \cite{Oca} has generalized this to group codes (of which cyclic codes are a subclass) and similarly showed that for a group code $\mathcal{C}$, an ideal in the group ring $\mathbb{F}_{q^2}[G^*]$, the extended group code $\widetilde{\mathcal{C}}$  is Euclidean self-dual if and only if $\mathcal{C}$ is a split group code for some splitting $(Z=\{0\}, X_0, X_1)$ of $G$ by $\mu_{-1}.$

The other method for constructing self-dual cyclic codes considers cyclic codes of even length.  In 1983, Sloane \cite{Sloane} noticed that extensive research have been done regarding self-dual codes but cyclic self-dual codes in particular have not received much attention.  He showed that the number of distinct cyclic self-dual binary codes of length $2^ab$ ($b$ odd) depends on the number of pairs of asymmetric cyclotomic cosets modulo $b$.  Jia et al. \cite {Jia} and Kai et al. \cite{Kai} generalized this to cyclic codes over $\mathbb{F}_q$.   They proved that  self-dual cyclic codes of length $n$ over $\mathbb{F}_q$ exist if and only if $q$ is power of 2 and $n$ is even.   In particular, $n$ and $q$ are not relatively prime.  This condition gives rise to repeated-root cyclic codes.  
 
Repeated-root cyclic codes were first studied by Castagnoli et al. \cite{Cas} and van Lint \cite{Van}.  van Lint focused on binary cyclic codes of length $2n$ ($n$ odd) obtained by the $|u|u+v|$ construction.  He showed that using this construction, an infinite sequence of optimal cyclic codes with distance 4 can be obtained.  Furthermore, these codes require low complexity decoding methods.  On the other hand, Castagnoli et al. derived a parity check matrix and gave a formula for the minimum distance of repeated-root cyclic codes.  They were able to find several repeated-root binary cyclic codes that contain the maximum number of codewords among all known binary codes of the same length and minimum distance.  However, they have also demonstrated that repeated root cyclic codes are not better than general cyclic codes of the same length.  Inspite of this, it is still worthwhile to study repeated-root cyclic codes under which self-dual cyclic codes of even length are classified.

This paper is organized as follows.  In Section 2, we give the notations that will be used in this paper and review some basic concepts regarding cyclic codes.  The reader is referred to Huffman and Pless \cite{HP} for a more detailed discussion of cyclic codes.  In Section 3, we focus on Euclidean self-dual cyclic codes over $\mathbb{F}_{2^r}$.  We look at previous results and use  $q$-cyclotomic cosets mod $\overline{n}$ and splittings of $\mathbb{Z}_{\overline{n}}$ to determine the existence of nontrivial self-dual cyclic codes.  We then obtain analogous statements in Section 4 for Hermitian self-dual cyclic codes over $\mathbb{F}_{2^{2 \ell}}$.

\section{Notations and Basic Concepts}
A cyclic code $\mathcal{C}$ is a linear code of length $n$ over a field $\mathbb{F}_q$, where $q$ is a power of a prime such that if $\textbf{c}=c_0c_1 \cdots c_{n-1} \in \mathcal{C}$, then its right cyclic shift $\textbf{c}=c_{n-1}c_0 \cdots c_{n-2}$ is also an element of $\mathcal{C}$.  We consider the bijective correspondence between vectors $\textbf{c}=c_0c_1 \cdots c_{n-1}$ in $\mathbb{F}_q^n$ and polynomials $c(x)=c_0+c_1+ \cdots +c_{n-1}$ in $\mathbb{F}_q[x]$ of degree at most $n-1$.  The codeword $\textbf{c}$ cyclically shifted one to the right can be represented by $x\cdot c(x)=c_{n-1}+c_0x + \cdots + c_{n-2}x^{n-1}$ where we have set $x^n=1.$  Hence, the study of cyclic codes is equivalent to the study of the residue class ring $\mathbb{F}_q[x]/(x^n-1)$.  The study of ideals in this residue class ring hinges on factoring $x^n-1.$  Following the result of Jia et al. (Theorem 1 of \cite{Jia}), we shall consider cyclic codes of even length throughout this paper.  That is, let  $n=2^\nu\cdot \overline{n}$, where $\overline{n}$ is odd, and $\nu$ is a positive integer.

To obtain the irreducible factors of $x^n-1$,  we recall the concept of $q$-cyclotomic cosets.  Let $a$ be a non-negative integer,  $0 \leq a < \overline{n}$, and $ gcd (q,\overline{n})=1$,  the \textbf{$q$-cyclotomic coset of $a$ modulo $\overline{n}$} is the set
$$C_a=\{a, aq, aq^2, \dots, aq^{r-1}\}$$
where each element is computed modulo $\overline{n}$, $r$ is the smallest positive integer such that $aq^r \equiv a$ mod $\overline{n}$, and $a$ is usually taken as the smallest number in the set.  Note that the distinct $q$-cyclotomic cosets mod $\overline{n}$ partition the set $\{0,1, \dots,  \overline{n}-1\}$.  

Suppose $t=ord_{\overline{n}}(q)$, i.e. $t$ is the smallest positive integer such that $q^t \equiv 1$ mod $\overline{n}$.  If $\alpha$ is a primitive $\overline{n}^{th}$ root of unity in $\mathbb{F}_{q^t}$, then the \textbf{minimal polynomial} of $\alpha^a$ over $\mathbb{F}_q$, is $f_a(x)=\prod_{i \in C_a}(x-\alpha^i)$.   For $\overline{n}$ and $q$ relatively prime,  the factorization of $x^{\overline{n}}-1$ into pairwise-irreducible polynomials over $\mathbb{F}_q$ is given by
$$x^{\overline{n}}-1 = \prod_{a \in I}f_a(x),$$
where $I$ is the complete set of $q$-cyclotomic coset representatives modulo $\overline{n}$.  Hence
$$x^n-1=[x^{\overline{n}}-1]^{2^\nu} = \prod _{a \in I} f_a(x)^{2^\nu}$$
has $\overline{n}$ distinct roots with multiplicity $2^\nu$ in its splitting field.

An $[n,k]$ cyclic code can be described by its generator polynomial.  The \textbf{generator polynomial} of a code is the unique monic polynomial of degree $n-k$ that is a divisor of $x^n-1$ in $\mathbb{F}_q[x]$.   It is a product of the minimal polynomials of $\alpha^a$ where $a$ is any element of $I$ (defined above) over $\mathbb{F}_q$.  

Suppose $\mathcal{C}$ is an $[n,k]$ cyclic code with generator polynomial $g(x)$.  The polynomial
$$p(x)=\dfrac{x^n-1}{g(x)}=\sum_{i=0}^k p_ix^i$$
is called the \textit{parity-check polynomial of} $\mathcal{C}$.   Consequently, the generator polynomial of $\mathcal{C}^\perp$ is defined as
$$p^*(x)=p^{-1}_0 x^kp(x^{-1}).$$

 \section{Self-Dual Cyclic Codes}
Jia et al. \cite{Jia} and Kai et al. \cite{Kai} proved that there exists at least one self-dual cyclic code of length $n$ over $\mathbb{F}_q$ if and only if $q$ is a power of 2 and $n$ is even.  This code is the $[n,\frac{n}{2}]$ trivial self-dual cyclic code and has generator polynomial $x^{\frac{n}{2}}+1$.  Our aim is to determine the conditions under which nontrivial self-dual cyclic codes exist.

Let $f(x)$ be a polynomial in $\mathbb{F}_q[x]$.  The \textbf{reciprocal polynomial} of $f(x)$  is the polynomial
$$f^*(x)=f_0^{-1} \cdot x^{deg f} \cdot f(x^{-1})=f_0^{-1}(f_k + f_{k-1}x+ \cdots + f_0x^k),$$
where $f_0$ is the constant term of the polynomial $f(x)$.  $f(x)$ is called a \textbf{self-reciprocal} polynomial if $f(x)=f^*(x)$.

It is known that if $\alpha^{a}, \alpha^{qa}, \ldots, \alpha^{q^ka}$ are the nonzero roots of $f$ in some extension field of $\mathbb{F}_q$, then $\alpha^{-a}, \alpha^{-qa}, \ldots, \alpha^{-q^ka}$ are the nonzero roots of $f^*$ in that extension field. If $f(x)$ is irreducible over $\mathbb{F}_q$, so is $f^*(x)$.  Hence, if  $f_{a}(x)= \prod_{i \in C_a}(x-\alpha^i)$ is the minimal polynomial of $\alpha^a$ over $\mathbb{F}_q$, then its reciprocal polynomial is
$$f^*_{a}(x)= \prod_{i \in C_{a}}(x-\alpha^{-i})= \prod_{i \in C_{-a}}(x-\alpha^i)=f_{-a}(x).$$  
This implies that if $C_{a}$ is the $q$-cyclotomic coset that corresponds to the minimal polynomial $f_{a}(x)$ then $C_{-a}$ is the $q$-cyclotomic coset that corresponds to its reciprocal polynomial $f^*_{a}(x)$.  Note that $C_{-a}=\mu_{-1}C_a$, where the \textbf{multiplier} $\mu_{-1}$ is defined by $i \mu_{-1} \equiv -i$ mod $\overline{n}$ for each $i$ in $\{0,1,2, \ldots, \overline{n}-1\}.$

For $n=2^\nu \cdot \overline{n}$ and $q=2^r$, where $\overline{n}, q$ are relatively prime, $x^{\overline{n}}+1$ can be written as a product of distinct irreducible polynomial factors as  \cite{Jia}

$$x^{\overline{n}}+1=f_1(x) \cdots f_s(x) h_1(x)h_1^*(x) \cdots h_t(x)h_t^*(x),$$
where the $f_i(x)$'s are monic, irreducible, self-reciprocal polynomials over $\mathbb{F}_{2^r}$ and $h_j(x),h^*_j(x)$ form a pair of reciprocal polynomials which are also monic and  irreducible  over $\mathbb{F}_{2^r}$.  Hence,

$$x^n+1=(x^{\overline{n}}+1)^{2^\nu}=f_1(x)^{2^\nu} \cdots f_s(x)^{2^\nu} h_1(x)^{2^\nu}h_1^*(x)^{2^\nu} \cdots h_t(x)^{2^\nu}h_t^*(x)^{2^\nu}.$$

Theorem 2  of \cite{Jia} states that if $x^n+1$ is factorized as above, a cyclic code of length $n$ is self-dual over $\mathbb{F}_{2^r}$ if and only if its generator polynomial is of the form
$$g(x)=f_1(x)^{2^{\nu-1}} \cdots f_s(x)^{2^{\nu-1}} h_1(x)^{\beta_1}h_1^*(x)^{2^\nu-\beta_1} \cdots h_t(x)^{\beta_t}h_t^*(x)^{2^\nu-\beta_t}$$
where $0 \leq \beta_i \leq 2^\nu$ for each $1 \leq i \leq t$. \\

Note that using this factorization, the generator polynomial of the trivial self-dual cyclic code can be written as
$$x^{\frac{n}{2}}+1=f_1(x)^{2^{\nu-1}} \cdots f_s(x)^{2^{\nu-1}} h_1(x)^{2^{\nu-1}}h_1^*(x)^{2^{\nu-1}} \cdots h_t(x)^{2^{\nu-1}}h_t^*(x)^{2^{\nu-1}}.$$

It was earlier noted that the $f_i(x)$'s,  $h_j(x)$'s, $h_j^*(x)$'s are minimal polynomials that correspond to some $q$-cyclotomic coset $C_i, C_j$, and $C_{-j}$ respectively.  Thus, aside from studying the factors of $x^n -1$, we can look at the $2^r$-cyclotomic cosets mod $\overline{n}$ in the characterization of self-dual cyclic codes.    

\begin{definition}\label{def} \cite{Dcabelian}
Let gcd$(q, \overline{n})=1$.  A \textbf{splitting} of $\mathbb{Z}_{\overline{n}}$ by the multiplier $\mu_b, b \neq 0$  is a triple $(Z, X_0, X_1) $ which satisfies the following conditions: 
\begin{enumerate}
\item $Z, X_0,X_1$ are unions of $q$-cyclotomic cosets mod $\overline{n}$ such that
 $\mathbb{Z}_{\overline{n}}=Z \cup X_0 \cup X_1$ and $Z\cap X_0 \cap X_1 = \emptyset$.
 \item $\mu_b(Z)=Z, \mu_b(X_0)=X_1$ and $\mu_b(X_1)=X_0$.
 \end{enumerate}
 We say that a splitting is trivial if $X_0$ and $X_1$ are both empty.
 \end{definition}
 
\noindent
\begin{proposition}\label{lemA}Let $\gcd(q, \overline{n})=1$.  In the factorization of $x^{\overline{n}}-1$ over $\mathbb{F}_q$, there exists at least one pair of reciprocal, monic, irreducible polynomials if and only if there exists a nontrivial splitting $(Z,X_0,X_1)$ of $\mathbb{Z}_{\overline{n}}$ by $\mu_{-1}$ and each $q$-cyclotomic coset in $Z$ is fixed set-wise by $\mu_{-1}$.
\end{proposition}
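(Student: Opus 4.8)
The plan is to translate the statement into the language of $q$-cyclotomic cosets and the involution $\mu_{-1}$, using the dictionary already established in the excerpt: each irreducible factor of $x^{\overline n}-1$ is a minimal polynomial $f_a(x)$ corresponding to a unique coset $C_a$, its reciprocal satisfies $f_a^*(x)=f_{-a}(x)$ and corresponds to $C_{-a}=\mu_{-1}C_a$, and hence $f_a$ is self-reciprocal precisely when $C_a=C_{-a}$, i.e. when $C_a$ is fixed set-wise by $\mu_{-1}$. Under this correspondence a pair of distinct reciprocal irreducible factors $h_j,h_j^*$ is exactly a coset $C_j$ with $\mu_{-1}C_j\neq C_j$, and the whole claim becomes purely combinatorial: a coset not fixed by $\mu_{-1}$ exists if and only if a nontrivial splitting with the stated property exists. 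Note that since $\mu_{-1}$ is an involution on the set of cosets, every coset is either fixed set-wise or lies in a two-element orbit $\{C,\mu_{-1}C\}$.

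For the forward direction I would start from a pair of reciprocal factors, which supplies a coset $C_j$ with $\mu_{-1}C_j\neq C_j$, and build the splitting explicitly. Let $Z$ be the union of all cosets fixed set-wise by $\mu_{-1}$ (the self-reciprocal ones). The remaining cosets occur in disjoint two-element orbits $\{C,\mu_{-1}C\}$; choosing one coset from each such orbit for $X_0$ and placing its image $\mu_{-1}C$ in $X_1$ yields unions of cosets with $\mu_{-1}(X_0)=X_1$, $\mu_{-1}(X_1)=X_0$ and $\mu_{-1}(Z)=Z$, while $Z,X_0,X_1$ together partition $\mathbb{Z}_{\overline n}$. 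Since $C_j$ lies in one of these orbits, $X_0$ and $X_1$ are nonempty, so the splitting is nontrivial, and every coset in $Z$ is fixed set-wise by construction. I would then simply verify the defining conditions of Definition \ref{def} for this $(Z,X_0,X_1)$.

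For the converse I would assume a nontrivial splitting $(Z,X_0,X_1)$ in which every coset of $Z$ is fixed by $\mu_{-1}$. Nontriviality gives a coset $C_a\subseteq X_0$. Then $\mu_{-1}C_a\subseteq\mu_{-1}(X_0)=X_1$, and because $X_0$ and $X_1$ are disjoint we obtain $C_{-a}=\mu_{-1}C_a\neq C_a$; equivalently $f_a(x)$ is not self-reciprocal, so $f_a$ and $f_a^*=f_{-a}$ are a genuine pair of reciprocal irreducible factors of $x^{\overline n}-1$. This produces the required pair.

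Once the coset dictionary is in place the argument is essentially bookkeeping, so there is no deep obstacle; the one point demanding care is disjointness. The converse really needs $X_0\cap X_1=\emptyset$ (not merely $Z\cap X_0\cap X_1=\emptyset$) in order to conclude $C_a\neq\mu_{-1}C_a$, so I would make explicit that a splitting is a genuine partition of $\mathbb{Z}_{\overline n}$ into pairwise-disjoint unions of cosets. I would also remark that the set-wise-fixed condition on $Z$ is not vacuous: a bare splitting could place a swapped pair $\{C,\mu_{-1}C\}$ inside $Z$ and still satisfy $\mu_{-1}(Z)=Z$, so this extra condition is precisely what pins the splitting down to the canonical one constructed in the forward direction.
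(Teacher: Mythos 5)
Your proof is correct and takes essentially the same approach as the paper's: both use the dictionary $f_a^*=f_{-a}$, $C_{-a}=\mu_{-1}C_a$ to reduce the claim to coset combinatorics, taking $Z$ to be the union of the $\mu_{-1}$-fixed cosets and forming $X_0,X_1$ by separating the swapped pairs $\{C_j,C_{-j}\}$ in the forward direction, and extracting a coset $C_a\subseteq X_0$ with $\mu_{-1}C_a\subseteq X_1$ in the converse. Your explicit remarks that the converse needs pairwise disjointness of $X_0$ and $X_1$ (not merely $Z\cap X_0\cap X_1=\emptyset$ as the paper's Definition literally states) and that the fixed-setwise condition on $Z$ is not automatic are careful refinements the paper leaves implicit, but they do not alter the argument.
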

\begin{proof}
Suppose that over $\mathbb{F}_q$, $x^{\overline{n}}-1$ factors into $f_1(x) \cdots f_s(x)h_{1}(x)h^*_{1}(x)\ldots h_{\ell}(x)$  $h^*_{\ell}(x)$, where $\ell \geq 1$, the $f_k(x)'s$ are monic, irreducible, and self-reciprocal, while $h_{j}(x)$ and $h_{j}^*(x)$ for $1 \leq j \leq \ell$ comprise a pair of reciprocal polynomials which are monic and irreducible.  Let $C_1, \ldots, C_s$ be the $q$-cyclotomic cosets which correspond to the minimal polynomials $f_1(x), \ldots, f_s(x)$ respectively.  Since the $f_i(x)$'s are self-reciprocal, $\mu_{-1}C_k=C_k$ for all $k=1,\ldots, s$.   Take $Z=C_1 \cup \cdots \cup C_s$. 

On the other hand, $h^*_{j}(x)$ is the reciprocal polynomial of  $h_{j}(x)$, where $h^*_{j}(x) \neq h_{j}(x)$.  Hence, $h^*_{j}(x)=h_{-j}(x)$ and $\mu_{-1}C_{j}=C_{-j}$, $j \neq -j$.    Take $X_0=\cup_j C_{j}$ and $X_1=\cup_j C_{-j}$.  Since $j \geq 1$, we obtain a nontrivial splitting $(Z, X_0, X_1)$ of $\mathbb{Z}_{\overline{n}}$ by $\mu_{-1.}$

Conversely, suppose there exists a nontrivial splitting of $\mathbb{Z}_{\overline{n}}$ by $\mu_{-1}$.  That is,
$$\{0,1, \ldots, \overline{n}-1\}=Z \cup X_0 \cup X_1 \text{   s.t.   }  Z \cap X_0 \cap X_1 = \emptyset $$
and
$$X_0, X_1 \neq \emptyset, \mu_{-1}X_0=X_1, \mu_{-1}X_1=X_0$$
where $Z, X_0,X_1$ are unions of $q$-cyclotomic cosets modulo $\overline{n}$, and each $q$ cyclotomic coset in $Z$ is fixed set wise by $\mu_{-1}.$  Let $Z=C_{z_1} \cup \cdots \cup C_{z_k}$ for some $k \geq 1$.  Since $\mu_{-1}C_{z_i}=C_{z_i}$ for $i=1, \ldots, k$, the corresponding minimal polynomials  $f_0(x), f_{z_1}(x), \ldots, f_{z_k}(x)$ are self-reciprocal .

Since $X_0$ and $X_1$ are both non-empty, there exists at least one $q$-cyclotomic coset in $X_0$ and $X_1$.  Suppose  $X_0=\cup_jC_{j}$ for $j \geq 1$.   Then $\mu_{-1}X_0=X_1$ implies  $X_1=\cup_jC_{-j}$.  For $j \geq 1$, let $h_j(x)$ be the minimal polynomial which corresponds to $C_{j}$ and $h_{-j}(x)$ to $C_{-j}$.  But, $h_{-j}(x)=h^*_j(x)$ for each $j$.  Hence,  $h_j(x)$  and $h_{-j}(x)$ form a pair of reciprocal polynomials in the factorization of $x^{\overline{n}}-1.$ 
\hfill $\square$
\end{proof}
This proposition and the factorization condition given by Jia et al. (Theorem 2 of  \cite{Jia}) lead to the following statement.
\begin{theorem}\label{split}
 A nontrivial Euclidean self-dual cyclic code $\mathcal{C}$  of length $n=2^\nu\cdot \overline{n}$ over $\mathbb{F}_{2^r}$ exists if and only if there exists a nontrivial splitting $(Z, X_0, X_1)$ of $\mathbb{Z}_{\overline{n}}$ by $\mu_{-1}$ where each $q$-cyclotomic coset in $Z$ is fixed set-wise by $\mu_{-1}$.
\end{theorem}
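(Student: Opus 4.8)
The plan is to derive the theorem directly by combining the two building blocks already in hand: the Jia et al.\ parametrization of self-dual generator polynomials (Theorem 2 of \cite{Jia}) and Proposition \ref{lemA}. The conceptual bridge is the observation that, since the exponents of the self-reciprocal factors $f_i(x)$ are pinned at $2^{\nu-1}$ in every self-dual generator polynomial, the only freedom available lies in the exponents $\beta_i$ attached to the reciprocal pairs $h_i(x), h_i^*(x)$. Consequently a self-dual cyclic code can deviate from the trivial code $x^{n/2}+1$ --- whose exponents are all $\beta_i = 2^{\nu-1}$ --- precisely when at least one reciprocal pair is present, i.e.\ when $t \geq 1$ in the factorization of $x^{\overline{n}}+1$. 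Proposition \ref{lemA} translates the statement ``$t \geq 1$'' into the language of splittings, so the theorem will follow once I pin down both implications carefully.

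For the reverse implication, I would start from a nontrivial splitting $(Z, X_0, X_1)$ of $\mathbb{Z}_{\overline{n}}$ by $\mu_{-1}$ with each coset in $Z$ fixed set-wise. Proposition \ref{lemA} then guarantees at least one reciprocal pair $h_1(x), h_1^*(x)$ in the factorization of $x^{\overline{n}}+1$, hence $t\ge 1$. Since $n$ is even we have $\nu \geq 1$, so $0 < 2^{\nu-1} < 2^\nu$ and I may choose an exponent $\beta_1 \neq 2^{\nu-1}$ (for instance $\beta_1 = 0$) while keeping all remaining $\beta_i = 2^{\nu-1}$. Plugging these into the generator polynomial of Theorem 2 produces a self-dual cyclic code whose generator polynomial differs from $x^{n/2}+1$, hence a nontrivial one.

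For the forward implication, I would take a nontrivial self-dual cyclic code $\mathcal{C}$ and invoke Theorem 2 to write its generator polynomial in the parametrized form. Because $\mathcal{C}$ is nontrivial, its generator polynomial is not equal to $x^{n/2}+1$; comparing exponents factor-by-factor (the $f_i$-exponents already agree) forces $\beta_i \neq 2^{\nu-1}$ for some $i$, which is only possible when $t \geq 1$. Thus at least one reciprocal pair $h_i, h_i^*$ occurs in the factorization, and Proposition \ref{lemA} delivers the desired nontrivial splitting with the stated property on $Z$.

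I do not expect a genuine obstacle here, as the theorem is essentially a repackaging of the two cited results. The only points demanding care are purely bookkeeping: confirming that $\nu \geq 1$ makes $2^{\nu-1}$ a legitimate and non-extremal exponent so that an alternative choice of $\beta_1$ always exists, and verifying that the splitting condition ``each $q$-cyclotomic coset in $Z$ is fixed set-wise by $\mu_{-1}$'' appearing in the theorem is exactly the one supplied by Proposition \ref{lemA}. Once these are checked, both directions close immediately.
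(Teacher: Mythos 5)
Your proposal is correct and follows essentially the same route as the paper: both combine Theorem 2 of \cite{Jia} with Proposition \ref{lemA}, observing that nontriviality is equivalent to some $\beta_i \neq 2^{\nu-1}$, which forces $t \geq 1$, and conversely choosing one $\beta_i \neq 2^{\nu-1}$ to construct a nontrivial self-dual code. The only cosmetic difference is that the paper re-derives the exponent conditions by explicitly computing the dual's generator polynomial $p^*(x)$, whereas you invoke Jia et al.'s parametrization directly; the content is the same.
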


%This implies that if a cyclic code of length $n$ is self-dual then its generator polynomial is of the form:
%$$f_1(x)^{\alpha_1} \cdots f_s(x)^{\alpha_s} h_1(x)^{\beta_1}h_1^*(x)^{2^\nu-\beta_1} \cdots h_t(x)^{\beta_t}h_t^*(x)^{2^\nu-\beta_t}$$ 
%where $0 \leq  \alpha_i$ for each $1 \leq i \leq s,$ and $0 \leq \beta_j\leq 2^\nu$ for each $1 \leq j \leq t$ as was also shown by Jia et al. \cite{Jia}.    

We can then restate Corollary 1 of \cite{Jia} using $q$-cyclotomic cosets modulo $\overline{n}$ and the splitting $(Z, X_0, X_1)$ as follows. (Note that this is similar to Theorem 5  of \cite{Ned}).

\begin{corollary}
For $n=2^\nu \cdot \overline{n}, r>0$, the number of $[n, \frac{n}{2}]$ self-dual cyclic codes over $\mathbb{F}_{2^r}$  is exactly
$$(2^\nu+1)^t$$
where $t$ is the number of $2^r$-cyclotomic cosets in $X_0$ (or in $X_1$).
\end{corollary}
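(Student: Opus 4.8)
The plan is to count the number of distinct generator polynomials of the form guaranteed by Theorem 2 of \cite{Jia}, using the translation into $2^r$-cyclotomic cosets provided by Proposition \ref{lemA} and Theorem \ref{split}. Recall that every self-dual cyclic code of length $n = 2^\nu \cdot \overline{n}$ has generator polynomial
$$g(x) = f_1(x)^{2^{\nu-1}} \cdots f_s(x)^{2^{\nu-1}} h_1(x)^{\beta_1} h_1^*(x)^{2^\nu - \beta_1} \cdots h_t(x)^{\beta_t} h_t^*(x)^{2^\nu - \beta_t},$$
where $0 \le \beta_i \le 2^\nu$ for each $i$. The self-reciprocal factors $f_1, \dots, f_s$ correspond to the cyclotomic cosets in $Z$ and appear with the fixed exponent $2^{\nu-1}$ in every self-dual generator, so they contribute nothing to the count. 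The entire freedom lies in the reciprocal pairs $(h_i, h_i^*)$, which by the proof of Proposition \ref{lemA} are in bijection with the $t$ cyclotomic cosets comprising $X_0$ (equivalently, the $t$ cosets comprising $X_1 = \mu_{-1} X_0$).

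\textbf{First} I would establish that distinct choices of the exponent tuple $(\beta_1, \dots, \beta_t)$ yield distinct generator polynomials, hence distinct codes. Since $x^n + 1$ factors uniquely (up to order) into the pairwise-distinct irreducible factors $f_i, h_j, h_j^*$, the exponent of each irreducible factor in $g(x)$ is determined by $g$, and conversely $g$ is determined by these exponents. Thus the map $(\beta_1, \dots, \beta_t) \mapsto g(x)$ is injective, and two codes coincide if and only if their exponent tuples coincide. \textbf{Next}, I would simply count the tuples: each $\beta_i$ ranges independently over the $2^\nu + 1$ integer values $0, 1, \dots, 2^\nu$, so there are exactly $(2^\nu + 1)^t$ such tuples, giving $(2^\nu + 1)^t$ distinct self-dual cyclic codes.

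\textbf{The main subtlety} to address carefully is the bijective identification of the parameter $t$ in Corollary 1 of \cite{Jia} (the number of reciprocal pairs in the factorization of $x^{\overline{n}}+1$) with the number of $2^r$-cyclotomic cosets in $X_0$. This is precisely what the proof of Proposition \ref{lemA} furnishes: each reciprocal pair $(h_j, h_j^*)$ corresponds to a pair of cyclotomic cosets $(C_j, C_{-j})$ with $C_j \subseteq X_0$ and $C_{-j} = \mu_{-1} C_j \subseteq X_1$, and this correspondence is a bijection between the reciprocal pairs and the cosets of $X_0$. Since $\mu_{-1}$ maps $X_0$ bijectively onto $X_1$ and preserves the partition into cyclotomic cosets, $X_0$ and $X_1$ contain the same number $t$ of cosets, which justifies the parenthetical ``(or in $X_1$).'' I would note that the trivial self-dual code corresponds to the tuple $(2^{\nu-1}, \dots, 2^{\nu-1})$ and is always counted, consistent with its always existing; the nontrivial codes are those for which at least one $\beta_i \ne 2^{\nu-1}$, which exist precisely when $t \ge 1$, recovering Theorem \ref{split}. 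The argument is essentially a bookkeeping translation, so I expect no genuine obstacle beyond stating the coset-to-factor correspondence cleanly.
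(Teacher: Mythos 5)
Your proof is correct and follows essentially the same route as the paper, which obtains this corollary by restating Corollary 1 of \cite{Jia} (the count $(2^\nu+1)^t$ of exponent tuples $(\beta_1,\dots,\beta_t)$ in the generator polynomial of Theorem 2 of \cite{Jia}) with the number of reciprocal pairs identified, via Proposition \ref{lemA}, with the number of $2^r$-cyclotomic cosets in $X_0$. Your added care about injectivity of the tuple-to-code map and the equality of coset counts in $X_0$ and $X_1$ under $\mu_{-1}$ simply makes explicit what the paper leaves implicit.
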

Next, we want to determine the lengths $n$ for which nontrivial self-dual cyclic codes over $\mathbb{F}_{2^r}$ exist.  Kai et al. \cite{Kai} has proved a similar theorem.  We will present an alternate proof using  the splittings of $\mathbb{Z}_{\overline{n}}$ by $\mu_{-1}$ described earlier.   We first state the following lemma.

\begin{lemma}\label{lemB}
For $\overline{n}$ odd, let $\mathbb{Z}_{\overline{n}}$ be partitioned into $C_0, \ldots, C_{\ell}$ where each $C_i$ is a $q$-cyclotomic coset modulo $\overline{n}$.  Then, for all $a \neq -a$ in $\{1, ..., \ell\}$, $C_a = C_{-a}$ if and only if $q^k  \equiv -1$ mod $ \overline{n} $ for any positive integer $k$.  
\end{lemma}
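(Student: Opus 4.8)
The plan is to work directly from the definition $C_a=\{a,aq,aq^2,\dots\}$ (all entries reduced mod $\overline{n}$) together with the elementary fact that two $q$-cyclotomic cosets are either equal or disjoint. This immediately reduces the equality $C_a=C_{-a}$ to a membership statement: since $-a\in C_{-a}$, we have $C_a=C_{-a}$ if and only if $-a\in C_a$, i.e.\ if and only if $aq^j\equiv -a \pmod{\overline{n}}$ for some exponent $j$. I would first record the degenerate case $\overline{n}=1$ (where the index set $\{1,\dots,\ell\}$ is empty and the claim is vacuous), so that afterwards $\overline{n}\ge 3$ is odd and $1\not\equiv -1 \pmod{\overline{n}}$.

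For the sufficiency direction ($\Leftarrow$) the argument is routine: assuming $q^k\equiv -1\pmod{\overline{n}}$ for some positive integer $k$, I multiply this single congruence by an arbitrary $a$ to obtain $aq^k\equiv -a\pmod{\overline{n}}$. Hence $-a\in C_a$, so $C_a=C_{-a}$, and because $a$ was arbitrary this holds for every coset, in particular for every $a\neq -a$.

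The necessity direction ($\Rightarrow$) carries the real content, and the key idea is to \emph{specialize to $a=1$} rather than reason with a general $a$. By hypothesis $C_1=C_{-1}$, which means $-1\in C_1=\{1,q,q^2,\dots\}$, and this says precisely that $q^k\equiv -1\pmod{\overline{n}}$ for some $k\ge 1$ (with $k\ge 1$ forced since $q^0=1\neq -1$ when $\overline{n}\ge 3$). The reason I would single out $a=1$ is that $1$ is invertible mod $\overline{n}$, so the relation $q^j\equiv -1$ reads off directly; for a general $a$ sharing a common factor with $\overline{n}$ the relation $aq^j\equiv -a$ cannot be cancelled to $q^j\equiv -1$, and no such conclusion is available.

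This is exactly the subtlety I expect to be the main obstacle to state correctly: the lemma is genuinely about \emph{all} cosets being fixed set-wise by $\mu_{-1}$ and would be false read one coset at a time. For example, with $\overline{n}=15$ and $q=2$ one has $C_5=\{5,10\}=C_{-5}$ even though $2^k\not\equiv -1\pmod{15}$ for every $k$, the failure being witnessed by the non-self-paired coset $C_1=\{1,2,4,8\}$. Thus the write-up must use the universal hypothesis precisely to reach the distinguished index $a=1$, and I would phrase the necessity step as extracting the membership $-1\in C_1$ rather than attempting an illegitimate cancellation in an arbitrary congruence.
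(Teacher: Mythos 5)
Your proof is correct and takes essentially the same route as the paper's: the paper likewise observes that $C_a=C_{-a}$ only yields $q^k\equiv -1 \pmod{\overline{n}/\gcd(a,\overline{n})}$ for general $a$, specializes the hypothesis to $a=1$ to obtain $q^k\equiv -1\pmod{\overline{n}}$, and proves the converse by multiplying that congruence by $a$. Your explicit treatment of the degenerate case $\overline{n}=1$ and the counterexample $\overline{n}=15$, $q=2$ merely make rigorous the universal-quantifier reading that the paper's proof uses implicitly, so they refine rather than change the argument.
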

\begin{proof}
Suppose $a \neq -a$ and $C_a=C_{-a}$.  Then, $ \{a, qa, q^2a, \ldots, q^{m-1}a\} = \{-a, q(-a), q^2(-a), $ $\ldots, q^{m-1}(-a)\}$.  This implies that there exists $k$,  $1 \leq k \leq m-1$ such that $a \equiv q^k(-a)$ mod $\overline{n}$.  Hence,  $q^k \equiv -1$ mod $\left(\frac{\overline{n}}{gcd(a, \overline{n})}\right)$.  Taking $a=1$, we have $C_1=C_{-1}$ if and only if $q^k \equiv -1$ mod $\overline{n}$.  Then $C_j=C_{-j}$ for  $j=1, \ldots, \ell$  since $q^k \equiv -1$ mod $\overline{n}$ implies $jq^k \equiv -j$ mod $\overline{n}$. Conversely, if $q^k \equiv -1$ mod $\overline{n}$ then $aq^k \equiv -a$ mod $\overline{n}$.  Hence $C_a=C_{-a}$ for all $a=1, \ldots, \ell.$ \hfill $\square \\$
\end{proof}

\begin{theorem}\label{T2}
Nontrivial Euclidean self-dual cyclic codes of length $n=2^\nu\cdot \overline{n}$  ($\nu \in \mathbb{Z}^+, \overline{n}$ odd) over $\mathbb{F}_{2^r}, r \in \mathbb{Z}^+$ exist if and only if $2^{rk} \not \equiv -1$ mod $\overline{n}$ for all positive integers $k$.
 \end{theorem}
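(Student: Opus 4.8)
The plan is to reduce the statement, via Theorem~\ref{split}, to a purely combinatorial question about $2^r$-cyclotomic cosets and then settle that question with Lemma~\ref{lemB}. By Theorem~\ref{split}, a nontrivial Euclidean self-dual cyclic code of length $n=2^\nu\cdot\overline{n}$ over $\mathbb{F}_{2^r}$ exists precisely when $\mathbb{Z}_{\overline{n}}$ admits a nontrivial splitting $(Z,X_0,X_1)$ by $\mu_{-1}$ in which every $2^r$-cyclotomic coset contained in $Z$ is fixed set-wise by $\mu_{-1}$. So it suffices to characterize, in terms of the congruence $2^{rk}\equiv -1 \pmod{\overline{n}}$, exactly when such a splitting exists. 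Throughout I write $q=2^r$.

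The key bridging observation I would establish is that a nontrivial splitting of the required form exists if and only if there is at least one \emph{asymmetric} coset, i.e.\ a $q$-cyclotomic coset $C_a$ with $C_a\neq C_{-a}$. For the backward direction, given one asymmetric coset I would sort the cosets partitioning $\mathbb{Z}_{\overline{n}}$ into those fixed by $\mu_{-1}$ and those occurring in asymmetric pairs $\{C_b,C_{-b}\}$; placing all fixed cosets in $Z$ and distributing each asymmetric pair with one coset in $X_0$ and its $\mu_{-1}$-image in $X_1$ produces a splitting with $\mu_{-1}(X_0)=X_1$, $\mu_{-1}(X_1)=X_0$, and $X_0,X_1\neq\emptyset$, hence nontrivial. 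For the forward direction, if such a splitting exists then $X_0$ contains some coset $C_a$; since $\mu_{-1}(X_0)=X_1$ and $X_0,X_1$ are disjoint, $C_a=C_{-a}$ would force $C_a\subseteq X_0\cap X_1=\emptyset$, a contradiction, so $C_a$ is asymmetric.

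It then remains to translate ``there is an asymmetric coset'' into the stated congruence, and this is exactly the contrapositive of Lemma~\ref{lemB}. Because $\overline{n}$ is odd, the only index with $a\equiv -a \pmod{\overline{n}}$ is $a=0$, and $C_0=\{0\}$ is always fixed by $\mu_{-1}$; thus every coset is symmetric if and only if $C_a=C_{-a}$ for all $a\neq -a$, which by Lemma~\ref{lemB} holds if and only if $q^k\equiv -1 \pmod{\overline{n}}$ for some positive integer $k$. Negating, an asymmetric coset exists if and only if $2^{rk}\not\equiv -1 \pmod{\overline{n}}$ for all positive integers $k$, and combining this with the bridging observation and Theorem~\ref{split} yields the claimed equivalence.

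I expect the main obstacle to be the bridging observation rather than the number-theoretic step. In particular, one must argue carefully that when every coset is symmetric no nontrivial splitting can exist: since each coset of $X_0$ is then fixed by $\mu_{-1}$, the condition $\mu_{-1}(X_0)=X_1$ forces $X_0=X_1$, which together with the disjointness of $X_0$ and $X_1$ forces both to be empty. Handling the boundary case $a=0$ and being precise about the disjointness inherent in the notion of a splitting are the places where care is needed; the remainder is bookkeeping.
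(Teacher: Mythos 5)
Your proposal is correct and takes essentially the same route as the paper: reduce via Theorem~\ref{split} to the existence of a nontrivial splitting of $\mathbb{Z}_{\overline{n}}$ by $\mu_{-1}$, note that such a splitting exists exactly when some $q$-cyclotomic coset is asymmetric ($C_a \neq C_{-a}$), and translate that into the congruence condition using Lemma~\ref{lemB}. The only difference is that you fully spell out the converse---the explicit construction placing the fixed cosets in $Z$ and distributing each asymmetric pair between $X_0$ and $X_1$---which the paper compresses to ``proved similarly.''
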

\begin{proof}
Suppose $\mathcal{C}$ is a nontrivial Euclidean self-dual cyclic code of length $n$ over $\mathbb{F}_{2^r}$.  By Theorem \ref{split}, there exists a nontrivial splitting $(Z, X_0, X_1)$ of $\mathbb{Z}_{\overline{n}}$ by $\mu_{-1}$ and so $X_0 = \cup_j C_j,$ for  $j \geq 1.$  By Definition \ref{def}, $\mu_{-1}(X_0)=\mu_{-1}(\cup_jC_j)=\cup_jC_{-j}=X_1$  where $X_0 \cap X_1 = \emptyset$.  Then, $ C_{j} \neq C_{-j}$ for at least one $j$.  Using Lemma \ref{lemB},  we conclude that $2^{rk} \not \equiv -1 $ mod $n$ for all $k \in \mathbb{Z}^+.$  The converse is proved similarly.  \hfill $\square$
\end{proof}

\section{Hermitian Self-Dual Cyclic Codes} 

We now consider cyclic codes over $\mathbb{F}_{q^2}$, where $q$ is a power of a prime $p$.  Let \textbf{x} $= (x_0, x_1, \ldots, , x_{n-1})$ and \textbf{y} $= (y_0, y_1, \ldots, , y_{n-1})$ be vectors in $\mathbb{F}_{q^2}^n$.  Consider the involution   $ \bar{ }: a \mapsto a^q$ defined on $\mathbb{F}_{q^2}$.  The \textit{Hermitian scalar product} of \textbf{x} and \textbf{y}
is defined to be $\textbf{x}\cdot \overline{\textbf{y}}=\sum_{i=0}^{n-1}x_i\overline{y_i}=\sum_{i=0}^{n-1}x_iy_i^q.$

Let $\mathcal{C}$ be an $[n,k]$ cyclic code over $\mathbb{F}_{q^2}^n$.  The \textbf{Hermitian dual of $\mathcal{C}$} is the set $\mathcal{C}^{\perp_H}=\{\textbf{u} \in \mathbb{F}_{q^2}^n|\textbf{u} \cdot \overline{\textbf{w}}=0 \text{ for all } \textbf{w} \in \mathcal{C}\}$.  We say that a code $\mathcal{C}$ is \textbf{Hermitian self-dual} if $\mathcal{C}=\mathcal{C}^{\perp_H}.$

We extend the involution map to polynomials in $\mathbb{F}_{q^2}[x]$. For $f(x)=f_0+f_1x+\cdots+f_{n-1}x^{n-1}$in $\mathbb{F}_{q^2}[x]$, we set $\overline{f(x)}=\overline{f_0}+\overline{f_1}x+\cdots+\overline{f_{n-1}}x^{n-1}$.  Let $f(x)$ be a polynomial in $\mathbb{F}_{q^2}[x]$ and $f^*(x)$ its reciprocal polynomial as defined earlier.  The \textit{conjugate reciprocal polynomial} of 
$f(x)$ is denoted as $f^\dag(x)$ and is equal to $\overline{f^*(x)}.$   $f(x)$ is said to be \textit{self-conjugate reciprocal} if $f(x)=f^\dag(x)$.  Otherwise, $f(x)$ and $f^\dag(x)$ form a conjugate-reciprocal pair.

Let $\mathcal{C}$ be a nonzero $[n,k]$ cyclic code over $\mathbb{F}_{q^2}$ generated by $g(x)$.  If $p(x)$ is the parity-check polynomial of $\mathcal{C}$, then the generator polynomial of  $\mathcal{C}^{\perp_H}$ is $p^\dag(x).$  Hence $\mathcal{C}$ is Hermitian self-dual if and only if $g(x)=p^\dag(x)$.

Jitman et al. \cite{Jitman} have shown that Hermitian self-dual abelian codes in $\mathbb{F}_{q^2}[G]$ (char $\mathbb{F}_{q^2}=p$; order of the finite abelian group $G=mp^k, p\nmid m$)  exist if and only if $p=2$ and $k \geq 1$.  We only consider the cyclic case i.e., Hermitian self-dual cyclic codes over $\mathbb{F}_{q^2}$  exist if and only if $q=2^{2 \ell}$ and $n$ is even.   A Hermitian self-dual cyclic code over this field of length $n$ and dimension $n/2$ can always be constructed.  Let $x^n-1=x^n+1=(x^{n/2}+1)^2$.  Take $g(x)=x^{n/2}+1$.  The parity check polynomial, $p(x)=x^{n/2}+1=p^*(x)=p^\dag(x)$.  Hence, the code generated by $g(x)=x^{n/2}+1$ is not only Euclidean self-dual but also Hermitian self-dual.   It will also be referrred to as the trivial Hermitian self-dual cyclic code.  

Consequently, in this section, we will consider codes of even length, $n= 2^\nu \cdot \overline{n}$ over $\mathbb{F}_{2^{2 \ell}}$  and characterize nontrivial Hermitian self-dual cyclic codes over this field.  Note that Dicuangco et al.\cite{Dcduadic} have proven that for a cyclic code $\mathcal{C}$ over $\mathbb{F}_{q^2}$ of odd length, the extended code is Hermitian self-dual if and only if $\mathcal{C}$ is an odd-like duadic code split by $\mu_{-q}$.   We use Theorem 3.9 in  \cite{Jitman} as a lemma to prove the existence of nontrivial Hermitian self-dual cyclic of length $n$ over $\mathbb{F}_{2^{2 \ell}}$ using splittings of $\mathbb{Z}_{\overline{n}}$. 
\begin{lemma}
(Theorem 3.9,  \cite{Jitman}) Let $n=2^\nu \cdot \overline{n}$ where $\nu$ is a positive integer and $\overline{n}$ is odd.  Over $\mathbb{F}_{2^{2 \ell}}[x]$, let $x^n+1$ be factored as
$$x^n+1= [x^{\overline{n}}+1]^{2^\nu}=\left[f_1(x) \ldots f_s(x)h_1(x)h_1^\dag(x) \ldots h_t(x) h_t ^\dag (x)\right]^{2^\nu}$$
where the $f_i's$ are monic, irreducible, self-conjugate reciprocal polynomials and $h_j, h_j^*$ which are also monic and irreducible form a conjugate-reciprocal pair for $1 \leq j \leq t$. 
A cyclic code $\mathcal{C}$ of length $n$ is Hermitian self-dual over $\mathbb{F}_{2^{2 \ell}}$ if and only if its generator polynomial is of the form
$$g(x)=f_1^{2^{\nu-1}}(x) \ldots f_s(x)^{2^{\nu-1}}h_1(x)^{\gamma_1}h_1^\dag(x)^{2^\nu-\gamma_1} \ldots h_t^{\gamma_i}(x) h_t ^\dag (x)^{2^\nu-\gamma_i} $$
where $0 \leq \gamma_i \leq 2^\nu $ for each i.
\end{lemma}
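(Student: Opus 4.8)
The plan is to reduce the claim to a bookkeeping statement about the exponents of irreducible factors, using the criterion established just above the lemma: a nonzero cyclic code $\mathcal{C}$ of length $n$ generated by $g(x)$ with parity-check polynomial $p(x)=(x^n+1)/g(x)$ is Hermitian self-dual if and only if $g(x)=p^\dag(x)$. First I would invoke unique factorization in $\mathbb{F}_{2^{2\ell}}[x]$. Since $g(x)$ is a monic divisor of $x^n+1=\prod_i f_i(x)^{2^\nu}\prod_j h_j(x)^{2^\nu}h_j^\dag(x)^{2^\nu}$, it can be written uniquely as
$$g(x)=\prod_{i=1}^{s} f_i(x)^{a_i}\prod_{j=1}^{t} h_j(x)^{b_j}\,h_j^\dag(x)^{c_j},\qquad 0\le a_i,b_j,c_j\le 2^\nu,$$
and the complementary exponents then give $p(x)=\prod_i f_i(x)^{2^\nu-a_i}\prod_j h_j(x)^{2^\nu-b_j}\,h_j^\dag(x)^{2^\nu-c_j}$.

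The second step is to record the algebraic behaviour of the conjugate-reciprocal map $\dag=\overline{(\,\cdot\,)^*}$. Because coefficientwise conjugation $\overline{(\,\cdot\,)}$ is a ring homomorphism and the reciprocal map $f\mapsto f^*=f_0^{-1}x^{\deg f}f(x^{-1})$ is multiplicative on polynomials with nonzero constant term, the composite $\dag$ is multiplicative on such polynomials and hence distributes over the factorization of $p(x)$. Moreover, exactly as in the Euclidean discussion for $\mu_{-1}$, passing to $\dag$ corresponds on the level of roots (equivalently, of $2^{2\ell}$-cyclotomic cosets mod $\overline{n}$) to the multiplier $\mu_{-2^\ell}$; since $\mu_{-2^\ell}^2=\mu_{2^{2\ell}}$ fixes every such coset, $\dag$ acts as an involution on the irreducible factors. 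The factorization hypothesis then says precisely that $f_i^\dag=f_i$ for each $i$, while $(h_j)^\dag=h_j^\dag$ and $(h_j^\dag)^\dag=h_j$ for each $j$. Applying $\dag$ to $p(x)$ and using these facts yields
$$p^\dag(x)=\prod_{i} f_i(x)^{2^\nu-a_i}\prod_{j} h_j^\dag(x)^{2^\nu-b_j}\,h_j(x)^{2^\nu-c_j}.$$

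Finally I would impose $g(x)=p^\dag(x)$ and compare exponents of each irreducible factor via unique factorization. Matching the power of $f_i$ forces $a_i=2^\nu-a_i$, i.e. $a_i=2^{\nu-1}$ (which needs $\nu\ge 1$, consistent with the existence hypothesis), while matching the powers of $h_j$ and $h_j^\dag$ forces $b_j=2^\nu-c_j$, i.e. $c_j=2^\nu-b_j$. Writing $\gamma_j=b_j$ reproduces exactly the asserted form of $g(x)$, and since every step is an equivalence the converse is immediate. The exponent computation is routine; the one point that needs genuine care is the behaviour of $\dag$ with respect to the constant-term normalization $f_0^{-1}$ hidden in the reciprocal map, together with the verification that $\dag$ really is an involution that swaps $h_j\leftrightarrow h_j^\dag$ and fixes each $f_i$. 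Both are harmless here because every polynomial in sight is a monic divisor of $x^n+1$ and so has nonzero constant term, leaving no normalization ambiguity; establishing this multiplicativity and involution property cleanly is the main (though modest) obstacle, after which the exponent bookkeeping closes the argument.
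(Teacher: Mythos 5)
Your proof is correct, but note that the paper itself does not prove this statement at all: it is imported verbatim as a lemma from Jitman, Ling and Sol\'e (Theorem 3.9 of \cite{Jitman}), whose own argument is carried out in the much more general setting of Hermitian self-dual \emph{abelian} codes in group rings. What you have written is the natural direct specialization to cyclic codes: unique factorization of the monic divisor $g(x)$ of $x^n+1$, multiplicativity and involutivity of $\dag$ on monic divisors of $x^n+1$ (all of which have nonzero constant term, so the $f_0^{-1}$ normalization causes no trouble), the identification of $\dag$ with the multiplier $\mu_{-2^\ell}$ on $2^{2\ell}$-cyclotomic cosets exactly as the paper computes for $f_a^\dag$, and then exponent matching in $g=p^\dag$, which forces $a_i=2^{\nu-1}$ and $c_j=2^\nu-b_j$. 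This is precisely the technique of the Euclidean counterpart (Theorem 2 of \cite{Jia}, also cited rather than reproved in the paper), and it is the same exponent bookkeeping the authors themselves deploy when proving Theorem \ref{split} and its Hermitian analogue from this lemma. Two small points you handle implicitly but correctly: the comparison of exponents of $h_j$ and $h_j^\dag$ as distinct irreducibles is legitimate because $x^{\overline{n}}+1$ is squarefree for $\overline{n}$ odd, so the listed factors are pairwise distinct; and $a_i=2^{\nu-1}$ requires $\nu\ge 1$, which is guaranteed by the hypothesis that the length is even. So your proposal supplies a clean, more elementary proof of a result the paper only cites; what the citation route buys the authors is generality (the group-ring argument covers all abelian codes), while your route buys self-containedness at the level of polynomial factorization over $\mathbb{F}_{2^{2\ell}}$.
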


We now discuss the relationship between $2^{2 \ell}$-cyclotomic cosets and conjugate-reciprocal polynomials.  For a polynomial $f_a(x)$ in $\mathbb{F}_{2^{2 \ell}}[x]$ where $f_a(x)=\prod_{i \in C_a}(x-\alpha^i)$ and $C_a=\{a, 2^{2 \ell}a, \ldots, 2^{2 \ell r}a\}$ is the $2^{2 \ell}$-cyclotomic coset containing $a$, the nonzero roots of  $f_a(x)$ in some extension field of $\mathbb{F}_{2^{2 \ell}}$ are $\alpha^{a}, \ldots, \alpha^{2^{2 \ell r} a}.$  Using the definition of the conjugate-reciprocal polynomial and involution in $\mathbb{F}_{2^{2 \ell}}[x]$,  we can write 
$$f_a^\dag(x)=f_0^{-2^\ell}x^k\prod_{i \in C_a}(x^{-1}-\alpha^{2^\ell i})=\prod_{i \in C_a}(x-\alpha^{-2^\ell i})=\prod_{i \in C_{(-2^\ell a)}}(x-\alpha^i).$$
Hence, if the $2^{2 \ell}$-cyclotomic coset $C_a$ corresponds to the minimal polynomial  $f_a(x)$ in $\mathbb{F}_{2^{2 \ell}}$, then $C_{(-2^\ell a)}= \mu_{(-2^\ell)}C_a$ corresponds to the conjugate-reciprocal polynomial of  $f_a(x)$ which is $f_a^\dag(x)$.  

We can use this property to prove the following proposition in a manner similar to the proof of Proposition \ref{lemA} but instead of reciprocal polynomials, we use conjugate-reciprocal polynomials and instead of a splitting by $\mu_{-1}$, we consider the splitting by $\mu_{-2^\ell}.$ 

\begin{proposition}\label{propHSD}In the factorization of $x^{\overline{n}}-1$ over $\mathbb{F}_{2^{2 \ell}}$, there exists at least one pair of conjugate-reciprocal, monic, irreducible polynomials if and only if there exists a nontrivial splitting $(Z,X_0,X_1)$ of $\mathbb{Z}_{\overline{n}}$ by $\mu_{-2^\ell}$ and each $2^{2 \ell}$-cyclotomic coset in $Z$ is fixed set-wise by $\mu_{-2^\ell}$.
\end{proposition}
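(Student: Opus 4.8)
The plan is to mirror the proof of Proposition~\ref{lemA}, replacing the reciprocal polynomial $f^*$ by the conjugate reciprocal polynomial $f^\dag$ and the multiplier $\mu_{-1}$ by $\mu_{-2^\ell}$. The engine of the argument is the correspondence established in the displayed computation just preceding the proposition: if the $2^{2\ell}$-cyclotomic coset $C_a$ corresponds to the minimal polynomial $f_a(x)$, then $\mu_{-2^\ell}C_a = C_{-2^\ell a}$ is the coset corresponding to $f_a^\dag(x)$. Reading this correspondence off directly, $f_a(x)$ is self-conjugate reciprocal exactly when $\mu_{-2^\ell}C_a = C_a$, while $f_a(x)$ and $f_b(x)$ form a conjugate-reciprocal pair exactly when $C_b = \mu_{-2^\ell}C_a \neq C_a$.

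Before translating the two directions, I would record the one structural fact that distinguishes this case from the Euclidean one, which I expect to be the main (and only genuinely new) obstacle: $\mu_{-2^\ell}$ is an involution on the set of $2^{2\ell}$-cyclotomic cosets mod $\overline{n}$. Indeed $\mu_{-2^\ell}^{\,2} = \mu_{2^{2\ell}}$, and since every $2^{2\ell}$-cyclotomic coset is fixed set-wise by multiplication by $2^{2\ell}$, the map $\mu_{2^{2\ell}}$ acts as the identity on cosets. This is precisely what makes the swap conditions $\mu_{-2^\ell}(X_0)=X_1$ and $\mu_{-2^\ell}(X_1)=X_0$ of Definition~\ref{def} self-consistent. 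In Proposition~\ref{lemA} the analogous point was immediate because $\mu_{-1}^{\,2}=\mu_1$ is the identity outright; here it must be deduced from the defining property of $2^{2\ell}$-cyclotomic cosets.

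For the forward direction, suppose $x^{\overline{n}}-1$ factors over $\mathbb{F}_{2^{2\ell}}$ as $f_1\cdots f_s\, h_1 h_1^\dag\cdots h_t h_t^\dag$ with $t\geq 1$, the $f_i$ being self-conjugate reciprocal and each $h_j,h_j^\dag$ a conjugate-reciprocal pair. Letting $C_1,\dots,C_s$ be the cosets of the $f_i$ and $C_j, C_{-2^\ell j}$ the cosets of $h_j, h_j^\dag$, I would set $Z=C_1\cup\cdots\cup C_s$, $X_0=\bigcup_j C_j$, and $X_1=\bigcup_j C_{-2^\ell j}=\mu_{-2^\ell}(X_0)$. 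The correspondence gives $\mu_{-2^\ell}C_i=C_i$ for each $i$, so every coset in $Z$ is fixed set-wise; and since each pair satisfies $C_j\neq C_{-2^\ell j}$ and the factors are distinct, the sets $X_0,X_1$ are nonempty, disjoint, and interchanged by $\mu_{-2^\ell}$. Thus $(Z,X_0,X_1)$ is a nontrivial splitting of $\mathbb{Z}_{\overline{n}}$ by $\mu_{-2^\ell}$.

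For the converse I would reverse this reading. Given a nontrivial splitting in which each coset of $Z$ is fixed by $\mu_{-2^\ell}$, those cosets yield self-conjugate reciprocal minimal polynomials, the $f_i$. Writing $X_0=\bigcup_j C_j$, the disjointness $X_0\cap X_1=\emptyset$ together with the involution property forces $C_j\neq \mu_{-2^\ell}C_j$ for every such coset, so the nonempty $X_0$ supplies at least one coset whose minimal polynomial $h_j$ and the polynomial $h_j^\dag$ of $C_{-2^\ell j}\subseteq X_1$ constitute a conjugate-reciprocal pair in the factorization of $x^{\overline{n}}-1$. Once the involution property is in hand, both directions are a direct transcription of the Euclidean argument, so the bulk of the work is the verification in the second paragraph rather than any fresh combinatorics.
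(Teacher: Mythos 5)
Your proposal is correct and follows essentially the same route as the paper, which itself proves this proposition by transcribing the proof of Proposition~\ref{lemA} with $f^\dag$ in place of $f^*$ and $\mu_{-2^\ell}$ in place of $\mu_{-1}$, using the displayed correspondence $f_a^\dag \leftrightarrow C_{-2^\ell a}=\mu_{-2^\ell}C_a$. Your explicit verification that $\mu_{-2^\ell}$ is an involution on the set of $2^{2\ell}$-cyclotomic cosets (since $\mu_{-2^\ell}^{\,2}=\mu_{2^{2\ell}}$ fixes every coset) is a detail the paper leaves implicit, and it is a worthwhile addition rather than a divergence.
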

The following theorem which is analogous to Theorem \ref{split} can be shown to be true by using Proposition \ref{propHSD} instead of Proposition  \ref{lemA} in the proof.
\begin{theorem}
A nontrivial Hermitian self-dual cyclic code $\mathcal{C}$ of length $n=2^\nu \cdot \overline{n}$ ($\nu \in \mathbb{Z}^+, \overline{n}$  odd) over $\mathbb{F}_{2^{2 \ell}}$ exists if and only if there exists a nontrivial splitting $(Z, X_0, X_1)$ of $\mathbb{Z}_{\overline{n}}$ by $\mu_{-2^\ell}$ and each $2^{2 \ell}$-cyclotomic coset in $Z$ is fixed setwise by $\mu_{-2^\ell}.$
\end{theorem}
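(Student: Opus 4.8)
The plan is to mirror the proof of Theorem~\ref{split} almost verbatim, making the systematic substitutions dictated by the Hermitian setting: conjugate-reciprocal polynomials in place of reciprocal polynomials, the multiplier $\mu_{-2^\ell}$ in place of $\mu_{-1}$, Proposition~\ref{propHSD} in place of Proposition~\ref{lemA}, and the preceding Lemma (Theorem 3.9 of \cite{Jitman}) in place of Theorem 2 of \cite{Jia}. Both ingredients are already available: the generator-polynomial characterization of Hermitian self-dual cyclic codes supplied by that Lemma, and the dictionary translating conjugate-reciprocal pairs into nontrivial splittings of $\mathbb{Z}_{\overline{n}}$ by $\mu_{-2^\ell}$ supplied by Proposition~\ref{propHSD}. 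So the work is to chain these two results together in each direction.

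For the forward direction, I would suppose that $\mathcal{C}$ is a nontrivial Hermitian self-dual cyclic code of length $n$. By the Lemma its generator polynomial has the form $g(x)=\prod_i f_i(x)^{2^{\nu-1}}\prod_j h_j(x)^{\gamma_j}h_j^\dag(x)^{2^\nu-\gamma_j}$ with $0\le\gamma_j\le 2^\nu$. Were the factorization of $x^{\overline{n}}-1$ to consist only of self-conjugate-reciprocal factors, every admissible $g(x)$ would be forced to equal $\prod_i f_i(x)^{2^{\nu-1}}=x^{n/2}+1$, i.e. the trivial code; hence the nontriviality of $\mathcal{C}$ forces the presence of at least one genuine conjugate-reciprocal pair $h_j\neq h_j^\dag$. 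By Proposition~\ref{propHSD} this is equivalent to the existence of a nontrivial splitting $(Z,X_0,X_1)$ of $\mathbb{Z}_{\overline{n}}$ by $\mu_{-2^\ell}$ in which each $2^{2\ell}$-cyclotomic coset in $Z$ is fixed setwise by $\mu_{-2^\ell}$.

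For the converse, assume such a nontrivial splitting exists. By Proposition~\ref{propHSD} the factorization of $x^{\overline{n}}-1$ over $\mathbb{F}_{2^{2\ell}}$ contains at least one conjugate-reciprocal pair $h_1,h_1^\dag$ with $h_1\neq h_1^\dag$. I would then exhibit an explicit nontrivial Hermitian self-dual code by choosing $\gamma_1\neq 2^{\nu-1}$ (possible since $0\le\gamma_1\le 2^\nu$) and setting every remaining exponent equal to $2^{\nu-1}$; the resulting $g(x)$ satisfies the hypotheses of the Lemma, hence generates a Hermitian self-dual cyclic code, and it is nontrivial precisely because $\gamma_1\neq 2^{\nu-1}$ forces $g(x)\neq x^{n/2}+1$.

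The constant-factor normalizations hidden inside the $f^\dag$ operation are routine and have already been absorbed into the definition of the conjugate-reciprocal polynomial and into Proposition~\ref{propHSD}, so I do not expect difficulty there. The one point genuinely requiring care, and the place I expect the argument to be most delicate, is the equivalence between ``the code is nontrivial'' and ``the splitting is nontrivial,'' that is, $X_0,X_1\neq\emptyset$. One must verify that a self-conjugate-reciprocal factor $h_j=h_j^\dag$ contributes a coset to $Z$ rather than to $X_0\cup X_1$, so that nonemptiness of $X_0$ and $X_1$ corresponds exactly to the presence of a nondegenerate pair; this is precisely what the coset identity $C_{(-2^\ell a)}=\mu_{-2^\ell}C_a$ established just before Proposition~\ref{propHSD} guarantees.
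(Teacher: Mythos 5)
Your proposal is correct and follows essentially the same route as the paper: the authors likewise prove this theorem by repeating the argument for Theorem~\ref{split} verbatim with the Hermitian substitutions, chaining the generator-polynomial characterization (Theorem 3.9 of \cite{Jitman}) with Proposition~\ref{propHSD}, deducing a genuine conjugate-reciprocal pair from nontriviality in the forward direction and constructing $g(x)$ with one exponent $\gamma_1\neq 2^{\nu-1}$ in the converse. Your observation that self-conjugate-reciprocal factors contribute cosets to $Z$ via $C_{(-2^\ell a)}=\mu_{-2^\ell}C_a$ is exactly the point the paper relies on as well.
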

A corollary to this gives the number of Hermitian self-dual cyclic codes over $\mathbb{F}_{2^{2 \ell}}$.\begin{corollary}\label{CorH}
For $n=2^\nu \cdot \overline{n}$, the number of $[n, \frac{n}{2}]$ Hermitian self-dual cyclic codes over $\mathbb{F}_{2^{2 \ell}}$  is exactly
$$(2^\nu+1)^t$$
where $t$ is the number of $2^{2 \ell}$-cyclotomic cosets in $X_0$ (or in $X_1$).
\end{corollary}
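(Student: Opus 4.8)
The plan is to obtain the count directly from the characterization of generator polynomials, in exact parallel with the Euclidean corollary above. By the Lemma stated above (Theorem 3.9 of \cite{Jitman}), every Hermitian self-dual cyclic code of length $n$ over $\mathbb{F}_{2^{2\ell}}$ is determined by a generator polynomial of the form
$$g(x)=f_1(x)^{2^{\nu-1}} \cdots f_s(x)^{2^{\nu-1}} h_1(x)^{\gamma_1}h_1^\dag(x)^{2^\nu-\gamma_1} \cdots h_t(x)^{\gamma_t}h_t^\dag(x)^{2^\nu-\gamma_t},$$
with $0 \leq \gamma_j \leq 2^\nu$ for each $j$. Since a cyclic code is in bijection with its unique monic generator polynomial, which in turn is a monic divisor of $x^n+1$, counting the codes amounts to counting the admissible exponent data appearing in this expression.

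First I would note that the self-conjugate-reciprocal factors $f_1,\ldots,f_s$ contribute no freedom: each is forced to occur with the fixed exponent $2^{\nu-1}$. The only degrees of freedom are therefore the integers $\gamma_1,\ldots,\gamma_t$, one per conjugate-reciprocal pair $(h_j,h_j^\dag)$, and each $\gamma_j$ ranges independently over $\{0,1,\ldots,2^\nu\}$, giving exactly $2^\nu+1$ choices per pair. Because the factorization of $x^n+1$ into irreducibles over $\mathbb{F}_{2^{2\ell}}$ is unique, distinct tuples $(\gamma_1,\ldots,\gamma_t)$ produce distinct generator polynomials and hence distinct codes; conversely the Lemma guarantees that every Hermitian self-dual code arises from some such tuple. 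This yields a bijection between Hermitian self-dual cyclic codes and the set $\{0,\ldots,2^\nu\}^t$, so their number is $(2^\nu+1)^t$.

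Next I would identify the exponent $t$ with the quantity named in the statement. By Proposition \ref{propHSD}, the conjugate-reciprocal pairs in the factorization of $x^{\overline{n}}-1$ correspond one-to-one with the $2^{2\ell}$-cyclotomic cosets making up $X_0$: writing $X_0=\cup_j C_j$, each coset $C_j$ gives the factor $h_j$, while its image $\mu_{-2^\ell}(C_j)\subseteq X_1$ gives $h_j^\dag$. Since $\mu_{-2^\ell}$ restricts to a bijection from $X_0$ onto $X_1$, the number of cosets in $X_0$ equals the number in $X_1$, and both equal the number of pairs $t$. Substituting into the count above gives the claimed formula.

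Once the characterization Lemma is available, the argument is essentially routine counting; the one point deserving care is the bijectivity claim, namely that different exponent tuples really do give different codes and that no Hermitian self-dual code is omitted. I expect this to be the main (though mild) obstacle, and I would address it by invoking the uniqueness of the generator polynomial of a cyclic code together with unique factorization of $x^n+1$, which together make the correspondence between codes and tuples genuinely one-to-one. A secondary check is that $t$ counts cyclotomic cosets in $X_0$ rather than elements of $X_0$; this is immediate from Proposition \ref{propHSD} and the fact that $\mu_{-2^\ell}$ permutes the $2^{2\ell}$-cyclotomic cosets.
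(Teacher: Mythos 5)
Your proposal is correct and follows exactly the argument the paper intends: the paper states this corollary without an explicit proof, as an immediate restatement of the counting in Jitman et al.'s characterization (the Lemma), where the $t$ independent exponents $\gamma_j \in \{0,\ldots,2^\nu\}$ give $(2^\nu+1)^t$ codes and Proposition \ref{propHSD} identifies the conjugate-reciprocal pairs with the $2^{2\ell}$-cyclotomic cosets in $X_0$. Your added care about bijectivity (uniqueness of the monic generator polynomial plus unique factorization of $x^n+1$) and the observation that $\mu_{-2^\ell}$ bijects the cosets of $X_0$ onto those of $X_1$ supply precisely the details the paper leaves implicit.
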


We can also determine the lengths $n$ for which nontrivial Hermitian self-dual cyclic codes over $\mathbb{F}_{2^{2 \ell}}$ exist as follows.

\begin{lemma}\label{lemH}
For $q=2^\ell$ and $ \overline{n}$ odd, let $\mathbb{Z}_{\overline{n}}$ be partitioned into $C_0, \ldots, C_{j}$ where each $C_i$ is a $q^2$-cyclotomic coset modulo $\overline{n}$.  Then, for all $a \neq -a$ in $\{1, ..., j\}$, $C_a = C_{-qa}$ if and only if $q^{2k+1}  \equiv -1$ mod $\overline{n}$ for any positive integer $k$.
\end{lemma}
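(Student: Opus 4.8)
The plan is to mirror the proof of Lemma~\ref{lemB}, replacing the $q$-cyclotomic cosets by $q^2$-cyclotomic cosets and the reciprocal relation $C_{-a}=\mu_{-1}C_a$ by the conjugate-reciprocal relation $C_{-qa}=\mu_{-q}C_a$ derived above. First I would write the coset explicitly as $C_a=\{a,\,aq^2,\,aq^4,\ldots,aq^{2(m-1)}\}$, where $m=|C_a|$, so that $C_{-qa}=\{-qa,\,-q^3a,\ldots,-q^{2m-1}a\}=\{-aq^{2i+1}: 0\le i\le m-1\}$. The structural fact to isolate is the parity distinction: the elements of $C_a$ are $a$ times \emph{even} powers of $q$, whereas the elements of $C_{-qa}$ are $-a$ times \emph{odd} powers of $q$. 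This is precisely what forces an odd exponent to appear in the congruence.

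For the forward direction, suppose $a\neq -a$ and $C_a=C_{-qa}$. Since $a\in C_a=C_{-qa}$, there is an index $s$ with $a\equiv -aq^{2s+1}\pmod{\overline{n}}$, and dividing out the common factor gives $q^{2s+1}\equiv -1 \pmod{\overline{n}/\gcd(a,\overline{n})}$. Specializing to $a=1$ (which satisfies $a\neq -a$ whenever $\overline{n}\ge 3$) yields $q^{2k+1}\equiv -1\pmod{\overline{n}}$. A short technical step then guarantees $k$ may be taken positive: squaring the congruence shows that $d:=ord_{\overline{n}}(q^2)$ divides $2k+1$ and is hence odd, so replacing $2k+1$ by $2(k+d)+1$ preserves the congruence while making the index positive.

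For the converse, and simultaneously for the propagation to every $a$, I would observe that $q^{2k+1}\equiv -1\pmod{\overline{n}}$ gives $aq^{2k+1}\equiv -a$, that is $a\equiv (-qa)(q^2)^k\pmod{\overline{n}}$, so $a\in C_{-qa}$. Since distinct $q^2$-cyclotomic cosets are disjoint, $a\in C_a\cap C_{-qa}$ forces $C_a=C_{-qa}$, and this holds for each $a$.

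I do not expect a genuine obstacle here, since the computation runs parallel to Lemma~\ref{lemB}. The only points requiring care are the parity bookkeeping — ensuring the witnessing exponent is genuinely odd, of the form $2k+1$, so that the multiplier $\mu_{-q}$ (rather than $\mu_{-1}$) is what is being captured — and the minor adjustment that lets $k$ be chosen as a positive integer rather than merely nonnegative.
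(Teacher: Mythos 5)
Your proposal is correct and follows essentially the same route as the paper's proof: extract a witness from $a\in C_a=C_{-qa}$ to get $q^{2s+1}\equiv -1 \pmod{\overline{n}/\gcd(a,\overline{n})}$, specialize to $a=1$ to obtain the congruence modulo $\overline{n}$, and for the converse propagate $aq^{2k+1}\equiv -a$ to every coset. Your two refinements --- adjusting the exponent by the order of $q^2$ so that $k$ is genuinely positive, and invoking disjointness of cosets to upgrade $a\in C_{-qa}$ to $C_a=C_{-qa}$ --- are minor points the paper glosses over (it restricts to $1\le k\le m-1$, silently omitting the $k=0$ case), and they tighten rather than change the argument.
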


\begin{proof}
Let $C_0, \ldots, C_{j}$ be the $q^2$-cyclotomic cosets mod $\overline{n}$.  Suppose $C_a=C_{-qa}$.  Then, $\{a, q^2a, \ldots, q^{2(m-1)}\cdot a\}=\{-qa, -q^3a, \ldots, -q\cdot q^{2(m-1)}\cdot a\}$.  This implies that there exists $k \in \mathbb{Z}$ where $1 \leq k \leq m-1$ such that $a \equiv q^{2k} \cdot -qa$ mod $\overline{n}$.  That is,  $q^{2k+1} \equiv -1$ mod $\left(\frac{\overline{n}}{gcd(a, \overline{n})}\right)$.   If $a=1$,  $C_1=C_{-1}$ if and only if $q^{2k+1} \equiv -1$ mod $\overline{n}$.  Then $C_i=C_{-qi}$ for  $i=1, \ldots, j$  since $q^{2k+1} \equiv -1$ mod $\overline{n}$ implies $iq^{2k+1} \equiv -i$ mod $\overline{n}$. Conversely, if $q^{2k+1} \equiv -1$ mod $\overline{n}$ then $aq^{2k+1} \equiv -a$ mod $\overline{n}$.  Hence $C_a=C_{-qa}$ for all $a=1, \ldots, j.$ \hfill $\square \\$
\end{proof}
Using this lemma instead of  Lemma \ref{lemB} in the proof of Theorem \ref{T2}, we can show that the following theorem holds. 
\begin{theorem}
Nontrivial Hermitian self-dual cyclic codes of length $n=2^\nu\cdot \overline{n}$ ($\nu \in \mathbb{Z}^+, \overline{n}$  odd) over $\mathbb{F}_{2^{2 \ell}}$ exist if and only if  $2^{\ell(2k+1)} \not \equiv -1$ mod $\overline{n}$ for all positive integers $k$.
\end{theorem}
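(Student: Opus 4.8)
The plan is to follow the proof of Theorem \ref{T2} verbatim, swapping each Euclidean ingredient for its Hermitian counterpart. Where that argument invoked the splitting characterization (Theorem \ref{split}) and Lemma \ref{lemB}, I would instead invoke the preceding Hermitian characterization theorem (the analogue of Theorem \ref{split} phrased for the multiplier $\mu_{-2^\ell}$) together with Lemma \ref{lemH}. Throughout I set $q=2^\ell$, so the relevant cosets are $q^2$-cyclotomic cosets mod $\overline{n}$, the multiplier is $\mu_{-q}=\mu_{-2^\ell}$, and the discussion preceding Proposition \ref{propHSD} supplies the identity $\mu_{-2^\ell}C_a=C_{-2^\ell a}=C_{-qa}$ that makes the substitution legitimate.

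For the forward direction I would assume that a nontrivial Hermitian self-dual cyclic code $\mathcal{C}$ of length $n$ over $\mathbb{F}_{2^{2\ell}}$ exists. By the Hermitian characterization theorem this yields a nontrivial splitting $(Z,X_0,X_1)$ of $\mathbb{Z}_{\overline{n}}$ by $\mu_{-2^\ell}$, so $X_0=\cup_j C_j$ with $X_0\neq\emptyset$. By Definition \ref{def}, $\mu_{-2^\ell}(X_0)=\cup_j C_{-qj}=X_1$ with $X_0\cap X_1=\emptyset$, which forces $C_a\neq C_{-qa}$ for at least one index $a=j$. The contrapositive of Lemma \ref{lemH} then delivers $2^{\ell(2k+1)}=q^{2k+1}\not\equiv -1$ mod $\overline{n}$ for every positive integer $k$, as required.

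For the converse I would assume $2^{\ell(2k+1)}\not\equiv -1$ mod $\overline{n}$ for all positive integers $k$. By Lemma \ref{lemH} this guarantees some $q^2$-cyclotomic coset $C_a$ with $C_a\neq C_{-qa}$, that is $\mu_{-2^\ell}C_a\neq C_a$. I would then build a nontrivial splitting by placing $C_a$ in $X_0$ and $C_{-qa}=\mu_{-2^\ell}C_a$ in $X_1$, distributing the remaining cosets compatibly with $\mu_{-2^\ell}$ (self-paired cosets into $Z$, the rest into conjugate pairs split between $X_0$ and $X_1$). Since $X_0,X_1\neq\emptyset$ and every coset in $Z$ is $\mu_{-2^\ell}$-fixed, the Hermitian characterization theorem produces a nontrivial Hermitian self-dual cyclic code.

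I expect no conceptual obstacle; the delicate point is purely one of bookkeeping in the exponent. One must confirm that the "self-paired" condition governing membership in $Z$ under $\mu_{-2^\ell}$ is exactly $C_a=C_{-2^\ell a}=C_{-qa}$, matching the hypothesis of Lemma \ref{lemH} rather than the $C_a=C_{-a}$ of Lemma \ref{lemB}, and that the odd exponent $2k+1$ emerging from Lemma \ref{lemH} is precisely the $\ell(2k+1)$ appearing in the statement. Verifying that this alignment is exact, and in particular that replacing $\mu_{-1}$ by $\mu_{-2^\ell}$ does not introduce spurious even powers of $q$, is the only step warranting genuine care.
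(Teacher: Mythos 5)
Your proposal follows essentially the same route as the paper, which proves this theorem in one line by rerunning the argument of Theorem \ref{T2} with Lemma \ref{lemH} in place of Lemma \ref{lemB} and the splitting by $\mu_{-2^\ell}$ in place of that by $\mu_{-1}$ --- exactly the substitutions you make. Your explicit construction of the splitting in the converse ($\mu_{-2^\ell}$-fixed cosets into $Z$, the remaining cosets paired between $X_0$ and $X_1$) is correct and merely spells out what the paper leaves as ``proved similarly.''
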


Table \ref{table1} shows the values of $n$ for which nontrivial Hermitian self-dual cyclic codes over $\mathbb{F}_4$ exist and the number of Hermitian self-dual codes (including the trivial Hermitian self-dual code) for each $n$ computed using Corollary \ref{CorH}.  Here, $n=2^\nu\cdot \overline{n}$ where $\overline{n}$ is odd and $t = $ number of 4-cyclotomic cosets in $X_0$ (or $X_1$).  The highest minimum distance (HMinD) of the cyclic code of length $n$ for $n \leq 100$ was computed using MAGMA.
  \begin{center}
  \begin{table}[h]
  \centering
  \small
    \caption{Number of Hermitian Self-Dual Cyclic Codes over $\mathbb{F}_4$}
    \label{table1}
    \begin{tabular}{ | c  c   c   c  c  c  |  c  c   c   c  c  |}
    \hline
     	$n$	&$\overline{n} $&  $\nu$	&	$t$	& \textbf{No. of HSD}	& \textbf{HMinD}& $n$	&$\overline{n}$	&  $\nu$	&	$t$	& \textbf{No. of HSD} \\	
     \hline
     	10	&	5		&	1	&	1	&	3	&  	4	&	158	&	79			&	1	&	1	&	3	\\
	14	&	7		&	1	&	1	&	3	&  	4	&	160	&	5			&	5	&	1	&	33	\\
	20	&	5		&	2	&	1	&	5	&  	4	&	164	&	41			&	2	&	2	&	25	\\
	26	&	13		&	1	&	1	&	3	&  	6	&	168	&	21			&	3	&	3	&	729	\\
	28	&	7		&	2	&	1	&	5	&  	4	&	170	&	85			&	1	&	11	&	177147\\	
	30	&	15		&	1	&	3	&	27	&  	8	&	174	&	87			&	1	&	3	&	27	\\
	34	&	17		&	1	&	2	&	9	&  	8	&	178	&	89			&	1	&	4	&	81	\\
	40	&	5		&	3	&	1	&	9	&  	6	&	180	&	45			&	2	&	5	&	3125	\\
	42	&	21		&	1	&	3	&	27	&  	8	&	182	&	91			&	1	&	8	&	6561	\\
	46	&	23		&	1	&	1	&	3	&  	8	&	184	&	23			&	3	&	1	&	9	\\
	50	&	25		&	1	&	2	&	9	&  	4	&	186	&	93			&	1	&	9	&	19683\\	
	52	&	13		&	2	&	1	&	5	&  	6	&	188	&	47			&	2	&	1	&	5	\\
	56	&	7		&	3	&	1	&	9	&  	6	&	190	&	95			&	1	&	3	&	27	\\
	58	&	29		&	1	&	1	&	3	&  	12	&	194	&	97			&	1	&	2	&	9	\\
	60	&	15		&	2	&	3	&	125	&  	8	&	196	&	49			&	2	&	2	&	25	\\
	62	&	31		&	1	&	3	&	27	&  	10	&	200	&	25			&	3	&	2	&	81	\\
	68	&	17		&	2	&	2	&	25	&  	12	&	202	&	101			&	1	&	1	&	3	\\
	70	&	35		&	1	&	4	&	81	&  	14	&	204	&	51			&	2	&	6	&	15625\\	
	74	&	37		&	1	&	1	&	3	&  	12	&	206	&	103			&	1	&	1	&	3	\\
	78	&	39		&	1	&	3	&	27	&  	12	&	208	&	13			&	4	&	1	&	17	\\
	80	&	5		&	4	&	1	&	17	&  	6	&	210	&	105			&	1	&	12	&	531441\\	
	82	&	41		&	1	&	2	&	9	&  	12	&	212	&	53			&	2	&	1	&	5	\\
	84	&	21		&	2	&	3	&	125	&  	10	&	218	&	109			&	1	&	3	&	27	\\
	90	&	45		&	1	&	5	&	243	&  	8	&	220	&	55			&	2	&	3	&	125	\\
	92	&	23		&	2	&	1	&	5	&  	8	&	222	&	111			&	1	&	3	&	27	\\
	94	&	47		&	1	&	1	&	3	&  	12	&	224	&	7			&	5	&	1	&	33	\\
	98	&	49		&	1	&	2	&	9	&  	4	&	226	&	113			&	1	&	4	&	81	\\
	100	&	25		&	2	&	2	&	25	&  	8	&	228	&	57			&	2	&	2	&	25	\\
	102	&	51		&	1	&	6	&	729	&  	&	230	&	115			&	1	&	4	&	81	\\
	104	&	13		&	3	&	1	&	9	&  	&	232	&	29			&	3	&	1	&	9	\\
	106	&	53		&	1	&	1	&	3	&  	&	234	&	117			&	1	&	9	&	19683\\	
	110	&	55		&	1	&	3	&	27	&  	&	238	&	119			&	1	&	7	&	2187	\\
	112	&	7		&	4	&	1	&	17	&  	&	240	&	15			&	4	&	3	&	4913	\\
	114	&	57		&	1	&	2	&	9	&  	&	244	&	61			&	2	&	1	&	5	\\
	116	&	29		&	2	&	1	&	5	&  	&	246	&	123			&	1	&	6	&	729	\\
	120	&	15		&	3	&	3	&	729	&  	&	248	&	31			&	3	&	3	&	729	\\
	122	&	61		&	1	&	1	&	3	&  	&	250	&	125			&	1	&	3	&	27	\\
	124	&	31		&	2	&	3	&	125	&  	&	252	&	63			&	2	&	9	&	1953125\\	
	126	&	63		&	1	&	9	&	19683&  	&	254	&	127			&	1	&	9	&	19683\\	
	130	&	65		&	1	&	6	&	729	&  	&	260	&	65			&	2	&	6	&	15625	\\
	136	&	17		&	3	&	2	&	81	&  	&	266	&	133			&	1	&	7	&	2187	\\
	138	&	69		&	1	&	3	&	27	&  	&	270	&	135			&	1	&	8	&	6561	\\
	140	&	35		&	2	&	4	&	625	&  	&	272	&	17			&	4	&	2	&	289	\\
	142	&	71		&	1	&	1	&	3	&  	&	274	&	137			&	1	&	2	&	9	\\
	146	&	73		&	1	&	4	&	81	&  	&	276	&	69			&	2	&	3	&	125	\\
	148	&	37		&	2	&	1	&	5	&  	&	280	&	35			&	3	&	4	&	6561	\\
	150	&	75		&	1	&	6	&	729	&  	&	282	&	141			&	1	&	3	&	27	\\
	154	&	77		&	1	&	3	&	27	&  	&	284	&	71			&	2	&	1	&	5	\\
	156	&	39		&	2	&	3	&	125	&  	&	286	&	143			&	1	&	3	&	27	\\
     \hline
	\end{tabular}
\end{table}
\end{center}

\noindent
\textbf{\Large Acknowledgment}\\

The authors gratefully acknowledge financial support from the National Research Council of the Philippines.

\clearpage

\end{document}